\newtheorem{theorem}{Theorem}
\newtheorem{lemma}[theorem]{Lemma}
\newtheorem{remark}[theorem]{Remark}
\title{On the Backus average of layers with randomly oriented elasticity tensors}
\author{
Len Bos\footnote{%
Dipartimento di Informatica, Universit\`a di Verona, Italy, {\tt leonardpeter.bos@univr.it}}\,, 
Michael A. Slawinski\footnote{%
Department of Earth Sciences, Memorial University of Newfoundland,
{\tt mslawins@mac.com}}\,, 
Theodore Stanoev\footnote{%
Department of Earth Sciences, Memorial University of Newfoundland,
{\tt theodore.stanoev@gmail.com}}
}
\date{}
\begin{document}
\newcommand{\C}{{\mathbb{C}}}
\newcommand{\R}{{\mathbb{R}}}
\newcommand{\Z}{{\mathbb{Z}}}
\maketitle
\begin{abstract}
As shown by~\citet{Backus1962}, the average of a stack of isotropic layers results in a transversely isotropic medium.
Herein, we consider a stack of layers consisting of a randomly oriented anisotropic elasticity tensor, which---one might expect---would result in an isotropic medium.
However, we show---by means of a fundamental symmetry of the Backus average---that the corresponding Backus average is only transversely isotropic and {\it not}, in general, isotropic.
In the process, we formulate, and use, a relationship between the Backus and~\citet{GazisEtAl1963} averages.
\end{abstract}
\section{Introduction}
In this paper, we investigate the~\citet{Backus1962} average of a stack of anisotropic layers, wherein the tensors are oriented randomly.
In spite of a conceptual relation between randomness and isotropy, herein, the Backus average results in a medium, whose anisotropy, even though weak, is irreducible to isotropy, regardless of increasing randomness.

Each layer is expressed by Hooke's law,
\begin{equation*}
\sigma_{ij}=\sum\limits_{k=1}^3\sum_{\ell=1}^3c_{ijk\ell}\,\varepsilon_{k\ell}
\,,\qquad
i,j=1,2,3\,,
\end{equation*}
where the stress tensor, $\sigma_{ij}$\,, is linearly related to the strain tensor,
\begin{equation*}
\varepsilon_{k\ell}:=\frac{1}{2}\left(\frac{\partial u_k}{\partial x_\ell}+\frac{\partial u_\ell}{\partial x_k}\right),
\qquad 
k,\ell=1,2,3\,,
\end{equation*}
where $u$ and $x$ are the displacement and position vectors, respectively, and
\begin{equation*}
c_{ijk\ell}=c_{jik\ell}=c_{k\ell ij}
\end{equation*}
is the elasticity tensor, which has to be positive-definite.
Under the index symmetries, this tensor has twenty-one linearly independent components, and can be written as \citep[e.g.,][expression~(2.1)]{BonaEtAl2008}
\begin{equation}
C
=
\left[\begin{array}{cccccc}
c_{1111} & c_{1122} & c_{1133} & \sqrt{2}\,c_{1123} & \sqrt{2}\,c_{1113} & \sqrt{2}\,c_{1112}\\
c_{1122} & c_{2222} & c_{2233} & \sqrt{2}\,c_{2223} & \sqrt{2}\,c_{2213} & \sqrt{2}\,c_{2212}\\
c_{1133} & c_{2233} & c_{3333} & \sqrt{2}\,c_{3323} & \sqrt{2}\,c_{3313} & \sqrt{2}\,c_{3312}\\
\sqrt{2}\,c_{1123} & \sqrt{2}\,c_{2223} & \sqrt{2}\,c_{3323} & 2\,c_{2323} & 2\,c_{2313} & 2\,c_{2312}\\
\sqrt{2}\,c_{1113} & \sqrt{2}\,c_{2213} & \sqrt{2}\,c_{3313} & 2\,c_{2313} & 2\,c_{1313} & 2\,c_{1312}\\
\sqrt{2}\,c_{1112} & \sqrt{2}\,c_{2212} & \sqrt{2}\,c_{3312} & 2\,c_{2312} & 2\,c_{1312} & 2\,c_{1212}
\end{array}\right]\,.
\label{eq:Chapman}
\end{equation}
Any elasticity tensor of this form is also positive-definite~\citep[e.g.,][]{BonaEtAl2007}.

A rotation of~$c_{ijk\ell}$\,, which is in~$\R^3$\,, expressed in terms of quaternions, is
\begin{equation*}\label{eq:A(q)}
A=A(q)=
\left[\begin{array}{ccc}
a^{2} + b^{2} - c^{2} - d^{2} & -2\,a\,d + 2\,b\,c & 2\,a\,c + 2\,b\,d \\
2\,a\,d + 2\,b\,c & a^{2} - b^{2} + c^{2} - d^{2} & -2\,a\,b + 2\,c\,d \\
-2\,a\,c + 2\,b\,d & 2\,a\,b + 2\,c\,d & a^{2} - b^{2} - c^{2} + d^{2} 
\end{array}\right]
\,,
\end{equation*}
where~$q=[\,a\,,\,b\,,\,c\,,\,d\,]$ is a unit quaternion.
The corresponding rotation of tensor~\eqref{eq:Chapman} is~\citep[diagram~(3.1)]{BonaEtAl2008}
\begin{equation*}
\widetilde{C}=\widetilde{A}\,C\widetilde{A}^{\,T}
\,,
\end{equation*}
where~$\tilde{A}$ is expression~\eqref{eq:At} in Appendix~\ref{app:Atilde}.
\section{The Backus and Gazis et al. averages}
\label{sec:BackusGazis}
To examine the elasticity tensors,~$C\in\R^{6\times 6}$\,, which are positive-definite, let us consider the space of all matrices~$\mathcal{M}:= \R^{6\times 6}$\,.
Its subspace of isotropic matrices is
\begin{equation*}
\mathcal{M}_{\rm iso }:=\{M\in \mathcal{M}\,:\, \widetilde{Q}\,M\,\widetilde{Q}^{\,T}=M\,,\,\,\forall\, Q\in SO(3)\}.
\end{equation*}

$\mathcal{M}_{\rm iso}$ is a linear space, since, as is easy to verify, if~$M_1,M_2\in\mathcal{M}_{\rm iso}$\,, then~$\alpha M_1+\beta M_2\in \mathcal{M}_{\rm iso}$\,, for all~$\alpha,\beta\in \R$\,.
Let us endow $\mathcal{M}$ with an inner product,
\begin{equation*}
\langle M_1,M_2\rangle_F:={\rm tr}(M_1\,M_2^{\,T})=\sum_{i,j=1}^6 (M_1)_{ij}(M_2)_{ij}\,,
\end{equation*}
and the corresponding Frobenius norm,
\begin{equation*}
\|M\|:=\sqrt{\langle M_1,M_2\rangle_F}
\,.
\end{equation*}

In such a context, \citet{GazisEtAl1963} prove the following theorem.

\begin{theorem}\label{thm:Gazis}
The closest element---with respect to the Frobenius norm---to~$M\in \mathcal{M}\subset\mathcal{M}_{\rm iso}$  is uniquely given by
\begin{equation*}
M_{\rm iso}
:=
\int\limits_{SO(3)}\widetilde{Q}\,M\,\widetilde{Q}^{\,T}\,{\rm d}\sigma(Q)\,,
\end{equation*}
where~${\rm d}\sigma(Q)$ represents the Haar probability measure on $SO(3)$\,.
\end{theorem}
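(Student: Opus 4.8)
The plan is to characterize $M_{\rm iso}$ as the orthogonal projection of $M$ onto the subspace $\mathcal{M}_{\rm iso}$, and then to verify that the averaging integral produces exactly this projection. Since $\mathcal{M}_{\rm iso}$ is a linear subspace of the inner-product space $(\mathcal{M},\langle\cdot,\cdot\rangle_F)$, the closest element to $M$ exists and is unique, and it is precisely the orthogonal projection $P(M)$ characterized by $P(M)\in\mathcal{M}_{\rm iso}$ together with $\langle M-P(M),N\rangle_F=0$ for all $N\in\mathcal{M}_{\rm iso}$. So the theorem reduces to showing that the claimed $M_{\rm iso}$ equals $P(M)$.

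First I would define $\Phi(M):=\int_{SO(3)}\widetilde{Q}\,M\,\widetilde{Q}^{\,T}\,{\rm d}\sigma(Q)$ and show $\Phi(M)\in\mathcal{M}_{\rm iso}$. The key tool is the invariance of the Haar measure: for any fixed $R\in SO(3)$, the substitution $Q\mapsto RQ$ leaves ${\rm d}\sigma$ invariant. One needs that the map $Q\mapsto\widetilde{Q}$ is a group homomorphism, i.e. $\widetilde{RQ}=\widetilde{R}\,\widetilde{Q}$, which follows from the fact that $\widetilde{\cdot}$ implements the induced action of $SO(3)$ on symmetric tensors written in the Kelvin/Voigt form of~\eqref{eq:Chapman}. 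Granting this, a direct computation gives
\begin{equation*}
\widetilde{R}\,\Phi(M)\,\widetilde{R}^{\,T}
=\int_{SO(3)}\widetilde{RQ}\,M\,\widetilde{RQ}^{\,T}\,{\rm d}\sigma(Q)
=\int_{SO(3)}\widetilde{Q}\,M\,\widetilde{Q}^{\,T}\,{\rm d}\sigma(Q)
=\Phi(M)\,,
\end{equation*}
so $\Phi(M)$ is fixed by every $\widetilde{R}$ and hence lies in $\mathcal{M}_{\rm iso}$.

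Next I would verify the orthogonality condition. For $N\in\mathcal{M}_{\rm iso}$ we have $\widetilde{Q}\,N\,\widetilde{Q}^{\,T}=N$, equivalently $N=\widetilde{Q}^{\,T}N\widetilde{Q}$ since $\widetilde{Q}$ is orthogonal. Using that $\langle\widetilde{Q}\,M\,\widetilde{Q}^{\,T},N\rangle_F={\rm tr}(\widetilde{Q}\,M\,\widetilde{Q}^{\,T}N^{\,T})$ and the cyclic property of the trace together with $\widetilde{Q}^{\,T}\widetilde{Q}=I$, one shows $\langle\widetilde{Q}\,M\,\widetilde{Q}^{\,T},N\rangle_F=\langle M,\widetilde{Q}^{\,T}N\widetilde{Q}\rangle_F=\langle M,N\rangle_F$ for each $Q$; integrating against the probability measure gives $\langle\Phi(M),N\rangle_F=\langle M,N\rangle_F$, hence $\langle M-\Phi(M),N\rangle_F=0$. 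Combined with $\Phi(M)\in\mathcal{M}_{\rm iso}$, this identifies $\Phi(M)$ as the orthogonal projection, so by uniqueness of best approximation onto a closed subspace $\Phi(M)=M_{\rm iso}$ is the unique closest element.

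The main obstacle I anticipate is the homomorphism property $\widetilde{RQ}=\widetilde{R}\,\widetilde{Q}$ and the attendant orthogonality $\widetilde{Q}^{\,T}=\widetilde{Q}^{-1}$; everything else is a formal consequence of Haar invariance and the projection theorem. These two facts are not purely formal—they encode the correct normalization built into the $\sqrt{2}$ and $2$ factors in~\eqref{eq:Chapman}, which are exactly what makes $Q\mapsto\widetilde{Q}$ a representation by orthogonal matrices rather than merely a similarity. I would therefore either cite the construction of $\widetilde{A}$ in Appendix~\ref{app:Atilde} (diagram~(3.1) of~\citet{BonaEtAl2008}) for these properties, or record them as a short preliminary lemma, so that the invariance argument above goes through cleanly.
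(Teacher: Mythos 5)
Your proposal is correct and follows essentially the same route as the paper: both reduce the claim to the orthogonality condition $M-M_{\rm iso}\perp\mathcal{M}_{\rm iso}$ and establish it via the homomorphism property $\widetilde{RQ}=\widetilde{R}\,\widetilde{Q}$, the orthogonality of $\widetilde{Q}$, and the invariance of the Haar measure. The only (cosmetic) difference is that you verify $\langle\widetilde{Q}\,M\,\widetilde{Q}^{\,T},N\rangle_F=\langle M,N\rangle_F$ pointwise in $Q$ and then integrate, whereas the paper inserts a conjugation by an arbitrary $\widetilde{A}$ inside the trace and performs a second integration over $A$ at the end; your version is slightly more streamlined but rests on exactly the same ingredients.
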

\begin{proof}
It suffices to prove that
\begin{equation*}
M-M_{\rm iso}\,\,\perp \,\,\mathcal{M}_{\rm iso}\,.
\end{equation*}
To do so, we let~$N\in \mathcal{M}_{\rm iso }$ be arbitrary.
Then, for any~$A\in SO(3)$\,,
\begin{align*}
\langle
M-M_{\rm iso}\,,N
\rangle_F
&=
{\rm tr}\left(\left(M-M_{\rm iso}\right)N^{\,T}\right)
\\
&=
{\rm tr}\left(\left(
M-\int\limits_{SO(3)} \widetilde{Q}\,M\,\widetilde{Q}^{\,T}\,{\rm d}\sigma(Q)
\right)N^T\right)
\\
&=
{\rm tr}\left(\widetilde{A}\left(\left(
M - \int\limits_{SO(3)}\widetilde{Q}\,M\,\widetilde{Q}^{\,T}\,{\rm d}\sigma(Q)
\right)N^{\,T}\right)\widetilde{A}^{\,T}\right)
\tag*{{\hbox{(as~$\widetilde{A}$ is orthogonal)}}}
\\
&=
{\rm tr}\left(\widetilde{A}\,M\,N^T\widetilde{A}^{\,T} -\widetilde{A}\left(\,
\int\limits_{SO(3)}\widetilde{Q}\,M\,\widetilde{Q}^{\,T}\,{\rm d}\sigma(Q)
\right)\widetilde{A}^{\,T}\left(\widetilde{A}\,N^{\,T}\widetilde{A}^{\,T}\right)\right)\,.
\end{align*}
But
\begin{align*}
\widetilde{A}
\left(\,
\int\limits_{SO(3)}\widetilde{Q}\,M\,\widetilde{Q}^{\,T}\,{\rm d}\sigma(Q)
\right)\widetilde{A}^{\,T}
&= 
\int\limits_{SO(3)}\widetilde{A}\left(
\widetilde{Q}\,M\,\widetilde{Q}^{\,T}
\right)\widetilde{A}^{\,T}\,{\rm d}\sigma(Q)
\quad
\hbox{(by linearity)}
\\
&=
\int\limits_{SO(3)}\left(\widetilde{A}\,\widetilde{Q}\right)M \left(\widetilde{A}\,\widetilde{Q}\right)^T{\rm d}\sigma(Q)\\
&=
\int\limits_{SO(3)}\left(\widetilde{A\,Q\,}\right)M\left(\widetilde{A\,Q\,}\right)^T\,{\rm d}\sigma(Q)
\quad
\hbox{(by the properties of the tilde operation)}
\\
&=
\int\limits_{SO(3)}\widetilde{Q}\,M\,\widetilde{Q}^{\,T}\,{\rm d}\sigma(Q)
\quad
\hbox{(by the invariance of the measure)}
\\
&=
M_{\rm iso}\,.
\end{align*}
Hence,
\begin{align*}
\langle  M-M_{\rm iso}\,,N\rangle_F
&=
{\rm tr}\left(\widetilde{A}\,M\,N^T\widetilde{A}^{\,T}\right)-
M_{\rm iso}\left(\widetilde{A}\,N^T\widetilde{A}^{\,T}\right)\\
&=
{\rm tr}\left(
\left(\widetilde{A}\,M\,\widetilde{A}^{\,T}\right)\left(\widetilde{A}\,N^{\,T}\widetilde{A}^{\,T}\right)-
M_{\rm iso}\left(\widetilde{A}\,N^{\,T}\widetilde{A}^{\,T}\right)
\right)\\
&=
{\rm tr}\left(
\left(\widetilde{A}\,M\,\widetilde{A}^{\,T}\right)N^{\,T} -
M_{\rm iso}\,N^{\,T}
\right)\\
&=
{\rm tr}\left(\left(\widetilde{A}\,M\,\widetilde{A}^{\,T}\right)N^{\,T}\right) -
{\rm tr}\left(M_{\rm iso}\,N^{\,T}\right)
\,,
\end{align*}
as by assumption,~$N\in \mathcal{M}_{\rm iso}$\,.

Finally, integrating over~$A\in SO(3)$\,, we obtain
\begin{align*}
\langle  M-M_{\rm iso}\,,N\rangle_F
&=
{\rm tr}\left(\left(\,\int\limits_{SO(3)}
\widetilde{A}\,M\,\widetilde{A}^T\,{\rm d}\sigma(A)
\right)N^{\,T}\right)-
{\rm tr}(M_{\rm iso}\,N^{\,T})\\
&=
{\rm tr}(M_{\rm iso}\,N^{\,T}) - {\rm tr}(M_{\rm iso}\,N^{\,T})\\
&=0\,,
\end{align*}
as required.
\end{proof}
Since any elasticity tensor, $C\in\R^{6\times6}$\,, is positive-definite, it follows that
\begin{equation*}
C_{\rm iso} =
\int\limits_{SO(3)}\widetilde{Q}\,C\,\widetilde{Q}^{\,T}\,{\rm d}\sigma(Q)
\end{equation*}
is both isotropic and positive-definite, since it is the sum of positive-definite matrices~$\widetilde{Q}\,C\,\widetilde{Q}^{\,T}$\,.
Hence,~$C_{\rm iso}$ is the closest isotropic tensor to~$C$\,, measured in the Frobenius norm.

If $Q_i\in SO(3)$\,,~$i=1\,,\,\ldots\,,\,n$\,, is a sequence of random samples from~$SO(3)$\,, then the sample means converge almost surely to the true mean,
\begin{equation}\label{eq:SampleMeans}
\lim_{n\to\infty}\frac{1}{n}\sum_{i=1}^n \widetilde{Q_i}\,C\, \widetilde{Q_i}^{\,T}
=
\int\limits_{SO(3)} \widetilde{Q}\,C\,\widetilde{Q}^{\,T}\,{\rm d}\sigma(Q)=C_{\rm iso}\,,
\end{equation}
which---in accordance with Theorem~\ref{thm:Gazis}---is the Gazis et al. average of $C$\,.

This paper relies on replacing the arithmetic average in expression~\eqref{eq:SampleMeans} by the Backus average, which provides a single, homogeneous model that is long-wave-equivalent to a thinly layered medium.
According to~\citet{Backus1962}, the average of the function~$f(x_3)$ of ``width''~$\ell'$ is the moving average given by
\begin{equation*}
\overline{f}(x_3):=\int\limits_{-\infty}^\infty w(\zeta-x_3)f(\zeta)\,{\rm d}\zeta
\,,
\end{equation*}
where the weight function,~$w(x_3)$\,, acts like the Dirac delta centred at~$x_3=0$\,, and exhibits the following properties.
\begin{equation*}
w(x_3)\geqslant0\,,\!\!
\quad w(\pm\infty)=0\,,\!\!
\quad
\int\limits_{-\infty}^\infty w(x_3)\,{\rm d}x_3=1\,,\!\!
\quad
\int\limits_{-\infty}^\infty x_3\,w(x_3)\,{\rm d}x_3=0\,,\!\!
\quad
\int\limits_{-\infty}^\infty x_3^2\,w(x_3)\,{\rm d}x_3=(\ell')^2\,.
\end{equation*}
These properties define~$w(x_3)$ as a probability-density function with mean zero and standard deviation~$\ell'$\,, thus explaining the term ``width'' for~$\ell'$\,.
\section{The block structure of $C$ $\to$ $\widetilde{A}\,C\widetilde{A}^{\,T}$}
The action $C\to \widetilde{A}\,C\widetilde{A}^{\,T}$ has a simple block structure that is exploited in Section~\ref{sec:FundamentalSymmetry}.
To see this, we consider $q=[\,a\,,\,0\,,\,0\,,\,d\,]$\,, with~$a:=\cos(\theta/2)$\,,~$d:=\sin(\theta/2)$\,; thus, in accordance with expression~\eqref{eq:A(q)},
\begin{equation}\label{eq:A1}
A=A(q)=\left[\begin{array}{ccc}
\cos\theta & -\sin\theta & 0 \\
\sin\theta & \cos\theta & 0 \\
0 & 0 & 1
\end{array}\right]
\end{equation}
and, in accordance with expression~\eqref{eq:At},
\begin{equation*}
\widetilde{A}=
\left[\def\arraystretch{1.5}\begin{array}{c*{5}{c}}
\cos^{2}\theta & \sin^{2}\theta & 0 & 0 & 0 & -\tfrac{1}{\sqrt{2}}\sin\left(2\,\theta\right) \\
\sin^{2}\theta & \cos^{2}\theta & 0 & 0 & 0 & \tfrac{1}{\sqrt{2}}\sin\left(2\,\theta\right) \\
0 & 0 & 1 & 0 & 0 & 0 \\
0 & 0 & 0 & \cos\theta & \sin\theta & 0 \\
0 & 0 & 0 & -\sin\theta & \cos\theta & 0 \\
\tfrac{1}{\sqrt{2}}\sin\left(2\,\theta\right) & -\tfrac{1}{\sqrt{2}}\sin\left(2\,\theta\right) & 0 & 0 & 0 & \cos\left(2\,\theta\right)
\end{array}\right]\,.
\end{equation*}
For~$q=[\,0\,,\,b\,,\,c\,,\,0\,]$\,, with~$b:=\cos(\theta/2)$ and~$c:=\sin(\theta/2)$\,,
\begin{equation}\label{eq:A2}
A=A(q)=\left[\begin{array}{ccc}
\cos\theta & \sin\theta & 0 \\
\sin\theta & -\cos\theta & 0 \\
0 & 0 & -1
\end{array}\right] 
\end{equation}
and
\begin{equation*}\widetilde{A}=
\left[\def\arraystretch{1.5}\begin{array}{c*{5}{c}}
\cos^{2}\theta & \sin^{2}\theta & 0 & 0 & 0 & \tfrac{1}{\sqrt{2}}\sin\left(2\,\theta\right) \\
\sin^{2}\theta & \cos^{2}\theta & 0 & 0 & 0 & -\tfrac{1}{\sqrt{2}}\sin\left(2\,\theta\right) \\
0 & 0 & 1 & 0 & 0 & 0 \\
0 & 0 & 0 & \cos\theta & -\sin\theta & 0 \\
0 & 0 & 0 & -\sin\theta & -\cos\theta & 0 \\
\tfrac{1}{\sqrt{2}}\sin\left(2\,\theta\right) & -\tfrac{1}{\sqrt{2}}\sin\left(2\,\theta\right) & 0 & 0 & 0 & -\cos\left(2\,\theta\right)
\end{array}\right]
\,.
\end{equation*}
In both cases permuting the rows and columns to the order~$(\,3\,,\,4\,,\,5\,,\,1\,,\,2\,,\,6\,)$ results in a diagonal block structure for~$\widetilde{A}$\,.
For expression~\eqref{eq:A1}, we have
\begin{equation*}
\widetilde{A} 
\to 
\left[\begin{array}{cc} 
\widetilde{A}_1 & 0 \\
0 & \widetilde{A}_2 
\end{array}\right]
\,,
\end{equation*}
where
\begin{equation*}
\widetilde{A}_1
=
\left[\begin{array}{ccc}
1 & 0 & 0 \\
0 & \cos\theta & \sin\theta \\
0 & -\sin\theta & \cos\theta
\end{array}\right]
\quad\hbox{and}\quad
\widetilde{A}_2=
\left[\def\arraystretch{1.5}\begin{array}{ccc}
\cos^{2}\theta & \sin^{2} & -\tfrac{1}{\sqrt{2}}\sin\left(2\,\theta\right) \\
\sin^{2}\theta & \cos^{2} & \tfrac{1}{\sqrt{2}}\sin\left(2\,\theta\right) \\
\tfrac{1}{\sqrt{2}}\sin\left(2\,\theta\right) & -\tfrac{1}{\sqrt{2}}\sin\left(2\,\theta\right) & \cos\left(2\,\theta\right)
\end{array}\right].
\end{equation*}
Both~$\widetilde{A}_1,\widetilde{A}_2\in\R^{3\times3}$ are rotation matrices.
Similarly, for expression~\eqref{eq:A2},
\begin{equation*}
\widetilde{A}
\to 
\left[\begin{array}{cc}
\widetilde{A}_1 & 0 \\ 
0 & \widetilde{A}_2 
\end{array}\right]
\,;
\end{equation*}
herein, $\widetilde{A}_1,\widetilde{A}_2\in\R^{3\times3}$ are reflection matrices.
Thus, in both cases, $\widetilde{A}_1$ and $\widetilde{A}_2$ are orthogonal matrices.

In either case, the following lemma holds.
\begin{lemma}\label{lem:BlockStructure}
Suppose that the rows and columns of~$C$ are permuted to the order~$(\,3\,,\,4\,,\,5\,,\,1\,,\,2\,,\,6\,)$ to have the block structure
\begin{equation*}
C
\to 
\left[\begin{array}{cc}
M & B \\ 
K & J 
\end{array}\right]\,,
\end{equation*}
with~$M\,,\,B\,,\,K\,,\,J\in\R^{3\times3}$\,, and that the rows and columns of~$\widetilde{A}$ are also so permuted.
Then,
\begin{equation*}
\widetilde{A}\,C\widetilde{A}^{\,T}\,\,\to\,\, 
\left[\def\arraystretch{1.25}\begin{array}{cc} 
\widetilde{A}_1\,M\,\widetilde{A}_1^{\,T} & \widetilde{A}_1\,B\,\widetilde{A}_2^{\,T} \cr
\widetilde{A}_2\,K\,\widetilde{A}_1^{\,T} & \widetilde{A}_2\,J\,\widetilde{A}_2^{\,T}
\end{array}\right]
\,.
\end{equation*}
\end{lemma}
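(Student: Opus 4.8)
The plan is to recognize the simultaneous row-and-column reordering as conjugation by a fixed permutation matrix, and then to exploit the fact that such a conjugation is multiplicative, so that it distributes across the triple product $\widetilde{A}\,C\,\widetilde{A}^{\,T}$.

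First I would let $P\in\R^{6\times6}$ denote the permutation matrix realizing the reordering $(\,3\,,\,4\,,\,5\,,\,1\,,\,2\,,\,6\,)$, so that permuting the rows and columns of any $X\in\R^{6\times6}$ into this order yields $P\,X\,P^{\,T}$; this is precisely the operation denoted by the arrow $\to$ in the statement. Since $P$ is orthogonal, $P^{\,T}P=I$, and hence the permuted form of the product satisfies
\begin{equation*}
P\,\widetilde{A}\,C\,\widetilde{A}^{\,T}P^{\,T}
=
\bigl(P\widetilde{A}P^{\,T}\bigr)\bigl(P\,C\,P^{\,T}\bigr)\bigl(P\,\widetilde{A}^{\,T}P^{\,T}\bigr)
=
\bigl(P\widetilde{A}P^{\,T}\bigr)\bigl(P\,C\,P^{\,T}\bigr)\bigl(P\widetilde{A}P^{\,T}\bigr)^{\,T}\,,
\end{equation*}
where I have inserted $P^{\,T}P=I$ between adjacent factors and used $\bigl(P\,X\,P^{\,T}\bigr)^{\,T}=P\,X^{\,T}P^{\,T}$. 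This reduces the claim to a statement purely about the permuted matrices.

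Next I would substitute the two block forms that hold after permutation: by the hypothesis established in the two cases computed above, $P\widetilde{A}P^{\,T}$ is block diagonal with diagonal blocks $\widetilde{A}_1,\widetilde{A}_2$, while $P\,C\,P^{\,T}$ is the $2\times2$ block matrix with blocks $M,B,K,J$. The remaining step is then the routine $2\times2$ block multiplication: left-multiplication by $\mathrm{diag}(\widetilde{A}_1,\widetilde{A}_2)$ scales each block row by the corresponding $\widetilde{A}_i$, and subsequent right-multiplication by $\mathrm{diag}(\widetilde{A}_1^{\,T},\widetilde{A}_2^{\,T})$ scales each block column by the corresponding $\widetilde{A}_j^{\,T}$, producing the four blocks $\widetilde{A}_1 M\widetilde{A}_1^{\,T}$, $\widetilde{A}_1 B\widetilde{A}_2^{\,T}$, $\widetilde{A}_2 K\widetilde{A}_1^{\,T}$, $\widetilde{A}_2 J\widetilde{A}_2^{\,T}$, exactly as claimed.

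There is no genuinely hard step here; the entire content is the observation that conjugation by the orthogonal matrix $P$ commutes with both matrix multiplication and transposition, so that the block-diagonal structure of the permuted $\widetilde{A}$ is inherited in block form by the map $X\mapsto\widetilde{A}\,X\,\widetilde{A}^{\,T}$. The only point requiring care is bookkeeping: one must verify that the \emph{same} permutation $P$ is applied to all three factors $\widetilde{A}$, $C$, $\widetilde{A}^{\,T}$ simultaneously, which is exactly what licenses the insertion of $P^{\,T}P=I$ and hence the factorization above. Once this is granted, the block-diagonal shape of $P\widetilde{A}P^{\,T}$---already exhibited in both the rotation case~\eqref{eq:A1} and the reflection case~\eqref{eq:A2}---does the rest.
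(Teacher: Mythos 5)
Your proposal is correct and follows essentially the same route as the paper's own proof: both introduce the permutation matrix $P$ realizing the reordering, insert $P^{\,T}P=I$ to factor $P\,(\widetilde{A}\,C\,\widetilde{A}^{\,T})\,P^{\,T}$ as $(P\widetilde{A}P^{\,T})(P\,C\,P^{\,T})(P\,\widetilde{A}^{\,T}P^{\,T})$, and finish with the routine $2\times2$ block multiplication against the block-diagonal form of $P\widetilde{A}P^{\,T}$. No gaps.
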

\begin{proof}
Let~$P\in\R^{6\times6}$ be the matrix obtained by permuting the rows of the identity to the order~$(\,3\,,\,4\,,\,5\,,\,1\,,\,2\,,\,6\,)$\,.
Our assumption is that
\begin{equation*}
P\,C\,P^{\,T} 
=
\left[\begin{array}{cc} 
M & B \\ 
K & J 
\end{array}\right]\,.
\end{equation*}
Then,
\begin{align*}
P\left(\widetilde{A}\,C\,\widetilde{A}^{\,T}\right)P^{\,T}
&=
\left(P\widetilde{A}\,P^{\,T}\right)\left(P\,C\,P^{\,T}\right)\left(P\,\widetilde{A}^{\,T}P^{\,T}\right)\\
&=
\left[\begin{array}{cc}
\widetilde{A}_1 & 0 \\ 
0 & \widetilde{A}_2 
\end{array}\right]
\left[\begin{array}{cc}
M & B \\ 
K & J 
\end{array}\right]
\left[\begin{array}{cc}
\widetilde{A}_1^{\,T} & 0 \\ 
0 & \widetilde{A}_2^{\,T}
\end{array}\right]\\
&=
\left[\def\arraystretch{1.25}\begin{array}{cc}
\widetilde{A}_1\,M\,\widetilde{A}_1^{\,T} & \widetilde{A}_1\,B\,\widetilde{A}_2^{\,T} \\
\widetilde{A}_2\,K\,\widetilde{A}_1^{\,T} & \widetilde{A}_2\,J\,\widetilde{A}_2^{\,T}
\end{array}\right]\,,
\end{align*}
as required.
\end{proof}
\section{The fundamental symmetry of the Backus average}
\label{sec:FundamentalSymmetry}
Let us examine properties of the Backus average, which---for elasticity tensors, $C_{i}$---we denote by
\begin{equation*}
\overline{\overline{\left(\,C_{1}\,,\,\ldots\,,\,C_{n}\,\right)}}\,.	
\end{equation*}
\begin{theorem}\label{thm:SymProp}
For~$A\in\R^{3\times 3}$ of the form of expression~\eqref{eq:A1} or \eqref{eq:A2} and any elasticity tensor, $C_{1}\,,\,\dots\,,\,C_{N}$ $\in\mathbb{R}^{6\times6}$\,,
\begin{equation}\label{eq:BAsym}
\widetilde{A}\,\,\overline{\overline{
\left(C_{1}\,,\,\ldots\,,\,C_{n}\right)
}}\,\,\widetilde{A}^{\,T}
=
\overline{\overline{
\left(\widetilde{A}\,C_{1}\,\widetilde{A}^{\,T}\,,\,\dots\,,\widetilde{A}\,C_{n}\,\widetilde{A}^{\,T}\right)
}}
\,,
\end{equation}
which is a symmetry condition.
Conversely, if for an orthogonal matrix, $A\in\R^{3\times3}$\,, we have equality~\eqref{eq:BAsym}, for any collection of elasticity tensors, $C_1\,,\,\ldots\,,\,C_n\in\R^{6\times6}$\,, then $A$ must be of the form of expression~\eqref{eq:A1} or \eqref{eq:A2}.
\end{theorem}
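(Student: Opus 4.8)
The plan is to organise both directions around the explicit block form of the Backus average in the permuted coordinates of Lemma~\ref{lem:BlockStructure}. Writing the $i$-th layer as $C_i\to\bigl[\begin{smallmatrix}M_i&B_i\\K_i&J_i\end{smallmatrix}\bigr]$ in the block order $(\,3\,,\,4\,,\,5\mid1\,,\,2\,,\,6\,)$, with $K_i=B_i^{\,T}$, I would first record the standard Backus formula in the form
\[
\overline{\overline{(C_1,\dots,C_n)}}\;\to\;
\begin{bmatrix}
\overline{M^{-1}}^{\,-1} & \overline{M^{-1}}^{\,-1}\,\overline{M^{-1}B}\\
\overline{K\,M^{-1}}\,\overline{M^{-1}}^{\,-1} & \overline{J}-\overline{K\,M^{-1}B}+\overline{K\,M^{-1}}\,\overline{M^{-1}}^{\,-1}\,\overline{M^{-1}B}
\end{bmatrix},
\]
where a single overline denotes the arithmetic mean over the $n$ layers. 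The point is that this expression is assembled entirely from three operations on the blocks: the arithmetic mean, the matrix inverse, and matrix multiplication; the only nonlinear ingredient is the harmonic-type average $\overline{M^{-1}}^{\,-1}$ of the ``normal-direction'' block~$M$.

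For the forward implication, let $A$ be of the form~\eqref{eq:A1} or~\eqref{eq:A2}. Section~\ref{sec:FundamentalSymmetry} shows that after permutation $\widetilde A$ becomes block diagonal, $\mathrm{diag}(\widetilde A_1,\widetilde A_2)$, with $\widetilde A_1,\widetilde A_2$ orthogonal, so by Lemma~\ref{lem:BlockStructure} the congruence $C\mapsto\widetilde A\,C\,\widetilde A^{\,T}$ acts blockwise as $M\mapsto\widetilde A_1M\widetilde A_1^{\,T}$, $B\mapsto\widetilde A_1B\widetilde A_2^{\,T}$, $K\mapsto\widetilde A_2K\widetilde A_1^{\,T}$, $J\mapsto\widetilde A_2J\widetilde A_2^{\,T}$. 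I would then check that each ingredient of the displayed formula is equivariant under this action: the inverse because $(\widetilde A_1M\widetilde A_1^{\,T})^{-1}=\widetilde A_1M^{-1}\widetilde A_1^{\,T}$ by orthogonality; the arithmetic mean because a fixed congruence commutes with summation; and the products because the inner factors telescope via $\widetilde A_1^{\,T}\widetilde A_1=\widetilde A_2^{\,T}\widetilde A_2=I$ (for instance $\overline{M^{-1}}^{\,-1}\,\overline{M^{-1}B}\mapsto\widetilde A_1\bigl(\overline{M^{-1}}^{\,-1}\,\overline{M^{-1}B}\bigr)\widetilde A_2^{\,T}$). Consequently the transformed block matrix is exactly $\mathrm{diag}(\widetilde A_1,\widetilde A_2)$ times the Backus block matrix times its transpose, which by Lemma~\ref{lem:BlockStructure} is the permuted form of $\widetilde A\,\overline{\overline{(\,C_1,\dots,C_n)}}\,\widetilde A^{\,T}$, giving~\eqref{eq:BAsym}.

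For the converse I would not attempt to analyse a general orthogonal $A$ directly, but instead feed the symmetry a single, decisive family of test layers: take $C_1\neq C_2$ both \emph{isotropic}. Since isotropic tensors satisfy $\widetilde A\,C_i\,\widetilde A^{\,T}=C_i$ for every orthogonal $A$, the right-hand side of~\eqref{eq:BAsym} collapses to $T:=\overline{\overline{(C_1,C_2)}}$, while the left-hand side is $\widetilde A\,T\,\widetilde A^{\,T}$; thus~\eqref{eq:BAsym} forces $\widetilde A\,T\,\widetilde A^{\,T}=T$. By~\citet{Backus1962}, $T$ is transversely isotropic with symmetry axis $x_3$, and for $C_1\neq C_2$ it is \emph{properly} transversely isotropic (not isotropic). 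Since the orthogonal stabiliser of a properly transversely isotropic tensor is exactly the stabiliser of its symmetry axis, invariance of $T$ under $A$ forces $A\,e_3=\pm e_3$. Finally, for $A\in SO(3)$ this axis condition, together with orthogonality and $\det A=1$, yields precisely the two cases $A\,e_3=e_3$ (form~\eqref{eq:A1}) and $A\,e_3=-e_3$ (form~\eqref{eq:A2}), completing the converse.

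The forward direction is essentially bookkeeping once the block formula is in hand, so the main obstacle is the converse. There the crux is twofold: first, guaranteeing that the test tensors produce a \emph{genuinely} transversely isotropic (non-isotropic) Backus average, which is where the hypothesis $C_1\neq C_2$ and the non-degeneracy of the Backus average are essential; and second, the structural fact that a properly transversely isotropic tensor admits no rotational symmetry moving its axis off the line $\R e_3$, so that $\widetilde A$-invariance is equivalent to $A\,e_3=\pm e_3$. I would isolate this last symmetry-group statement as a separate remark, since it is the step that converts an analytic invariance into the rigid geometric conclusion~\eqref{eq:A1}--\eqref{eq:A2}.
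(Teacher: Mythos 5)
Your forward direction is essentially the paper's own argument: permute to the block order $(3,4,5,1,2,6)$, use the block-diagonal form of $\widetilde{A}$ and Lemma~\ref{lem:BlockStructure}, and check equivariance of each ingredient (inverse, arithmetic mean, telescoping products) in the Backus block formulas. That part is fine.

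The converse has a genuine gap. Your test family is ``$C_1\neq C_2$ both isotropic,'' and you claim that then $T=\overline{\overline{(C_1,C_2)}}$ is \emph{properly} transversely isotropic. That is false in general: Backus's own analysis shows that the average of isotropic layers with constant shear modulus $\mu$ and varying $\lambda$ is again \emph{isotropic} (one checks directly from the block formulas that the result is the isotropic tensor with $\mu^*=\mu$ and $\lambda^*=\bigl(\overline{1/(\lambda+2\mu)}\bigr)^{-1}-2\mu$). So $C_1\neq C_2$ does not guarantee non-isotropy of $T$, and with such a pair your argument yields no constraint on $A$ at all. The hypothesis you actually need is that the layers have distinct shear moduli, and you must verify non-isotropy for a concrete choice. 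This is exactly what the paper does: it takes $C_1=I$, $C_2=2I$, computes $T=\mathrm{diag}(\tfrac32,\tfrac32,\tfrac43,\tfrac43,\tfrac43,\tfrac32)$, which is visibly transversely isotropic but not isotropic, and then extracts the constraint on $A$ by an explicit quaternion computation of the $(3,3)$ entry of $\widetilde{A}\,T\,\widetilde{A}^{\,T}$, arriving at $(b^2+c^2)(a^2+d^2)=0$, i.e.\ $b=c=0$ or $a=d=0$, which are precisely forms \eqref{eq:A1} and \eqref{eq:A2}.

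A secondary point: your final step invokes the structural fact that the orthogonal stabiliser of a properly transversely isotropic tensor is the stabiliser of its symmetry axis. This is true and citable (e.g.\ the symmetry-class literature the paper references), but as written it is asserted, not proved, and it is doing the real work of converting $\widetilde{A}\,T\,\widetilde{A}^{\,T}=T$ into $A\,e_3=\pm e_3$. The paper's direct computation of a single matrix entry in the quaternion parameters is more elementary and self-contained, and sidesteps the need for that classification result. If you repair the test pair as above, you could either keep your stabiliser argument (with a proper reference) or replace it by the paper's one-entry calculation.
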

\begin{proof}
As in Lemma~\ref{lem:BlockStructure}, we permute the rows and columns of~$C\in\mathbb{R}^{6\times6}$ to the order~$(\,3\,,\,4\,,\,5\,,\,1\,,\,2\,,\,6\,)$\,. 
Thus, we have the block structure
\begin{equation*}
C
=
\left[\begin{array}{cc}
M & B \\
K & J
\end{array}\right]
\,,\quad
M\,,\,B\,,\,K\,,\,J\in\mathbb{R}^{3\times3};
\end{equation*}
herein, we use the notation of equations (5)--(9) of~\citet{BDaltonSS2017}.
Also,~$\widetilde{A}$ has the block structure of
\begin{equation*}
\widetilde{A}
=
\left[\begin{array}{cc}
\widetilde{A}_{1} & 0 \\
0 & \widetilde{A}_{2}
\end{array}\right]
\,,\quad
\widetilde{A}_{1}\,,\,\widetilde{A}_{2}\in\mathbb{R}^{3\times3}\,,
\end{equation*}
and is orthogonal.

Let 
\begin{equation*}
\widetilde{C}
=
\widetilde{A}\,C\widetilde{A}^T 
=
\left[\begin{array}{cc}
\widetilde{A}_{1} & 0 \\
0 & \widetilde{A}_{2}
\end{array}\right]
\left[\begin{array}{cc}
M & B \\
K & J
\end{array}\right]
\left[\begin{array}{cc}
\widetilde{A}_{1}{}^T & 0 \\
0 & \widetilde{A}_{2}{}^T
\end{array}\right]
=
\left[\begin{array}{cc}
\widetilde{M} & \widetilde{B} \\
\widetilde{K} & \widetilde{J}
\end{array}\right]\,,
\end{equation*}
where, by Lemma~\ref{lem:BlockStructure},
\begin{equation*}
\widetilde{M} = \widetilde{A}_{1}\,M\,\widetilde{A}_{1}{}^T
\,,\qquad
\widetilde{B} = \widetilde{A}_{1}\,B\,\widetilde{A}_{2}{}^T
\,,\qquad
\widetilde{K} = \widetilde{A}_{2}\,K\,\widetilde{A}_{1}{}^T
\,,\qquad
\widetilde{J} = \widetilde{A}_{2}\,J\,\widetilde{A}_{2}{}^T
\,.
\end{equation*}
In particular,
\begin{align}
\label{eq:tildeMinverse}
\widetilde{M}^{-1}
&=
\left(\widetilde{A}_{1}\,M\,\widetilde{A}_{1}{}^T\right)^{-1} \\
\nonumber
&=
\widetilde{A}_{1}\,M^{-1}\,\widetilde{A}_{1}{}^T
\,.
\end{align}
The Backus-average equations are given by \citep{BDaltonSS2017}
\begin{equation}\label{eq:BAforms}
C_{BA}
=
\left[\begin{array}{cc}
M_{BA} & B_{BA}\\
K_{BA} & J_{BA}
\end{array}\right]
\,,
\end{equation}
where
\begin{align*}
M_{BA} &= \left(\,\overline{M^{-1}}\,\right)^{\!-1}\,, \\
B_{BA} &= \left(\,\overline{M^{-1}}\,\right)^{\!-1}\,\overline{M^{-1}\,B}\,, \\
K_{BA} &= \overline{K\,M^{-1}}\left(\,\overline{M^{-1}}\,\right)^{\!-1}\,,\\
J_{BA} 
&= 
\left(\overline{J} - \overline{K\,M^{-1}\,B} + \overline{K\,M^{-1}}\left(\,\overline{M^{-1}}\,\right)^{\!-1}\overline{M^{-1}\,B}\right)\,,
\end{align*}
and where~$\overline{\vphantom{\left(\right.}\hspace*{0.5pt}\circ\vphantom{\left.\right)}}$ denotes the arithmetic average of the expression~$\circ$\,; for example,
\begin{equation*}
\overline{M^{-1}}=\frac{1}{n}\sum_{i=1}^n M^{-1}_i\,.
\end{equation*}
Let $M_{BA}$\,, $B_{BA}$\,, $K_{BA}$ and $J_{BA}$ denote the associated sub-blocks of the Backus average of the~$\widetilde{A}\,C_{j}\widetilde{A}^{\,T}$\,.
Then,
\begin{align*}
\widetilde{A}_{1}\,M_{BA}\,\widetilde{A}_{1}{}^T
&=
\widetilde{A}_{1}\left(\,\overline{M^{-1}}\,\right)^{-1}\widetilde{A}_{1}{}^T &\\
&=
\left(\widetilde{A}_{1}\,\overline{M^{-1}}\widetilde{A}_{1}{}^T\right)^{\!-1} &\\
&=
\left(\,\overline{\widetilde{A}_{1}\,M^{-1}\,\widetilde{A}_{1}{}^T}\,\right)^{\!-1}
&
\left(\mathrm{by\,\,linearity}\right) \\
&=
\left(\,\overline{\left(\widetilde{A}_{1}\,M\,\widetilde{A}_{1}{}^T\right)^{-1}}\,\right)^{\!-1}
&
\left({\rm by\,\,equation~\eqref{eq:tildeMinverse}}\right) \\
&=
\left(\,\overline{\left(\widetilde{M}^{-1}\right)}\,\right)^{\!-1} \\
&=
\widetilde{M}_{BA}
\,,
\end{align*}
\begin{align*}
\widetilde{A}_{1}\,B_{BA}\,\widetilde{A}_{2}{}^T
&=
\widetilde{A}_{1}
\left(\,\left(\,\overline{M^{-1}}\,\right)^{-1}\,\overline{M^{-1}B}\,\right)
\widetilde{A}_{2}{}^T 
&
\\
&=
\left(\,\widetilde{A}_{1}\left(\,\overline{M^{-1}}\,\right)^{\!-1}\widetilde{A}_{1}{}^{T}\right)
\left(\widetilde{A}_{1}\,\overline{M^{-1}B}\,\widetilde{A}_{2}{}^T\,\right)
&
\\
&=
\widetilde{M}_{BA}\left(\widetilde{A}_{1}\,\overline{M^{-1}B}\,\widetilde{A}_{2}{}^T\right)
&
\left({\rm by~the~previous~result }\right)
\\
&=
\widetilde{M}_{BA}\,\overline{\left(\widetilde{A}_{1}\,M^{-1}B\,\widetilde{A}_{2}{}^T\right)}
&
\left({\rm by\,\,linearity}\right) 
\\
&=
\widetilde{M}_{BA}\,\overline{\left(\widetilde{A}_{1}\,M^{-1}\widetilde{A}_{1}{}^T\right)\left(\widetilde{A}_{1}\,B\,\widetilde{A}_{2}{}^T\right)}
\\
&=
\widetilde{M}_{BA}\,\overline{\widetilde{M}^{-1}\,\widetilde{B}\,}
\\
&=
\widetilde{B}_{BA}
\,,
\end{align*}
\begin{align*}
\widetilde{A}_{2}\,K_{BA}\,\widetilde{A}_{1}{}^T
&=
\widetilde{A}_{2}\,\overline{K\,M^{-1}}\left(\,\overline{M^{-1}}\,\right)^{\!-1}\!\widetilde{A}_{1}{}^T
\\
&=
\left(\widetilde{A}_{2}\,\overline{K\,M^{-1}}\,\widetilde{A}_{1}{}^T\right)
\left(\widetilde{A}_{1}\left(\overline{M^{-1}}\,\right)^{\!-1}\!\widetilde{A}_{1}{}^T\right)
\\
&=
\overline{\left(
\widetilde{A}_{2}\,K\,M^{-1}\,\widetilde{A}_{1}{}^T
\right)}\,\widetilde{M}_{BA}
\\
&=
\overline{\widetilde{K}\,\widetilde{M}^{-1}\,}\,\widetilde{M}_{BA}
\\
&=
\widetilde{K}_{BA}
\,,
\end{align*}
and
\begin{align*}
\widetilde{A}_{2}\,J_{BA}\,\widetilde{A}_{2}{}^T
&=
\widetilde{A}_{2}
\left(\overline{J\,} - \overline{K\,M^{-1}\,B} + \overline{K\,M^{-1}}\left(\,\overline{M^{-1}}\,\right)^{\!-1}\overline{M^{-1}\,B}\right)
\widetilde{A}_{2}{}^T
\\
&=
\widetilde{A}_{2}\left(\,\overline{J\,}\,\right)\widetilde{A}_{2}{}^T -
\widetilde{A}_{2}\left(\,\overline{K\,M^{-1}\,B}\,\right)\widetilde{A}_{2}{}^T +
\widetilde{A}_{2}\left(\overline{K\,M^{-1}}\left(\,\overline{M^{-1}}\,\right)^{\!-1}\overline{M^{-1}\,B}\,\right)\widetilde{A}_{2}{}^T
\\
&=
\overline{\widetilde{A}_{2}\left(\,J\,\right)\widetilde{A}_{2}{}^T} -
\overline{\widetilde{A}_{2}\left(\,K\,M^{-1}\,B\,\right)\widetilde{A}_{2}{}^T} +
\widetilde{A}_{2}\,\overline{K\,M^{-1}}\,\widetilde{A}_{1}{}^T\,\,
\widetilde{A}_{1}\left(\,\overline{M^{-1}}\,\right)^{\!-1}\!\widetilde{A}_{1}{}^T\,\,
\widetilde{A}_{1}\,\overline{M^{-1}\,B}\,\widetilde{A}_{2}{}^T
\\
&=
\overline{\widetilde{J}\,\,} -
\overline{
\left(\widetilde{A}_{2}\,K\,\widetilde{A}_{1}{}^T\right)
\left(\widetilde{A}_{1}\,M^{-1}\,\widetilde{A}_{1}{}^T\right)
\left(\widetilde{A}_{1}\,B\,\widetilde{A}_{2}{}^T\right)
} +
\\
&\qquad\quad
\overline{\widetilde{A}_{2}\,K\,M^{-1}\widetilde{A}_{1}{}^T}
\left(\widetilde{A}_{1}\left(\,\overline{M^{-1}}\,\right)^{\!-1}\!\widetilde{A}_{1}{}^T\right)
\overline{\left(
\widetilde{A}_{1}\,M^{-1}\widetilde{A}_{1}{}^T\,\,
\widetilde{A}_{1}\,B\,\widetilde{A}_{2}{}^T
\right)}
\\
&=
\overline{\widetilde{J}\,\,} -
\overline{\widetilde{K}\,\widetilde{M}^{-1}\widetilde{B}\,} +
\overline{\widetilde{K}\,\widetilde{M}^{-1}\,}\,\widetilde{M}_{BA}\,\overline{\widetilde{M}^{-1}\widetilde{B}\,}
\\
&=
\widetilde{J}_{BA}
\,,
\end{align*}
which completes the proof of equality~\eqref{eq:BAsym}.

To show the converse claimed in the statement of Theorem~\ref{thm:SymProp}, let us consider $C_{1}=I$ and $C_{2}=2\,I$\,.
Their Backus average is
\begin{equation}\label{eq:BAofC1andC2}
B
:=
\left[\begin{array}{cccccc}
\tfrac{3}{2} & 0 & 0 & 0 & 0 & 0 \\
0 & \tfrac{3}{2} & 0 & 0 & 0 & 0 \\
0 & 0 & \tfrac{4}{3} & 0 & 0 & 0 \\
0 & 0 & 0 & \tfrac{4}{3} & 0 & 0 \\
0 & 0 & 0 & 0 & \tfrac{4}{3} & 0 \\
0 & 0 & 0 & 0 & 0 & \tfrac{3}{2}
\end{array}\right]\,.
\end{equation}
Following rotation, 
\begin{equation}\label{eq:tildeB}
\widetilde{B}
:=
\widetilde{A}\,B\widetilde{A}^{\,T}
\,,
\end{equation}
where $\widetilde{A}$ is given in expression~\eqref{eq:At}.
It can be shown by direct calculation that the $(3,3)$ entry of~$\widetilde{B}$ is
\begin{equation}
\label{eq:33}
\widetilde{B}_{33}
=
\frac{4}{3}\left(
1 +
2\left(b^{2} + c^{2}\right)^{2}\left(a^{2} + d^{2}\right)^{2}
\right)
\,.
\end{equation}
Since $C_{1}$ and $C_{2}$ are multiples of the identity,
\begin{equation*}
\widetilde{A}\,C_{1}\widetilde{A}^{\,T} = C_{1}
\quad{\rm and}\quad
\widetilde{A}\,C_{2}\widetilde{A}^{\,T} = C_{2}
\,,
\end{equation*}
and the Backus average of~$\widetilde{A}\,C_{1}\widetilde{A}^{\,T}$ and~$\widetilde{A}\,C_{2}\widetilde{A}^{\,T}$ equals the Backus average of~$C_{1}$ and~$C_{2}$\,, which is matrix~\eqref{eq:BAofC1andC2}\,.
Hence,
\begin{equation*}
\widetilde{A}\,\,\overline{\overline{\left(\,C_{1}\,,\,C_{2}\,\right)}}\,\widetilde{A}^{\,T}
=
\overline{\overline{\left(\,\widetilde{A}\,C_{1}\widetilde{A}^{\,T}\,,\,\widetilde{A}\,C_{2}\widetilde{A}^{\,T}\,\right)}}
\end{equation*}
implies that, for expression~\eqref{eq:33},
\begin{equation*}
\frac{4}{3}\left(
1 +
2\left(b^{2} + c^{2}\right)^{2}\left(a^{2} + d^{2}\right)^{2}
\right)
=
\frac{4}{3}\,,
\end{equation*}
which results in
\begin{equation*}
\frac{2\sqrt{2}}{3}\left(b^{2} + c^{2}\right)\left(a^{2} + d^{2}\right)
=0\,.
\end{equation*}
Thus, either $b=c=0$ or $a=d=0$\,.
This is a necessary condition for symmetry~\eqref{eq:BAsym} to hold, as claimed.
\end{proof}
\begin{remark}\label{rem:DiagMatrices}
Theorem~\ref{thm:SymProp} is formulated for general positive-definite matrices~$C\in\R^{6\times6}$\,, not all of which represent elasticity tensors.
However, expression~\eqref{eq:BAsym} is continuous in the~$C_i$ and hence is true in general only if it is also true for~$C_i,$ such as diagonal matrices, which are limits of elasticity tensors.
\end{remark}
\section{The Backus average of randomly oriented tensors}
In this section, we study the Backus average for a random orientations of a given tensor.
As discussed in Section~\ref{sec:BackusGazis}, the arithmetic average of such orientations results in the Gazis et al. average, which is the closest isotropic tensor with respect to the Frobenius norm.
We see that---for the Backus average---the result is, perhaps surprisingly, different.

Given an elasticity tensor, $C\in\R^{6\times6}$\,, let us consider a sequence of its random rotations given by
\begin{equation*}
C_j :=
\widetilde{Q}_j\,C\,\widetilde{Q}_j^{\,T},\quad j=1\,,\,\ldots\,,\,n\,,
\end{equation*}
where~$Q_j\in\R^{3\times3}$ are random matrices sampled from $SO(3)$\,.

The~$C_j$ are samples from some distribution and, hence, almost surely,
\begin{equation*}
\overline{C}:=\lim_{n\to\infty}\frac{1}{n}\sum_{j=1}^{n} C_{j} =\mu(C)\,,
\end{equation*}
the true mean,
\begin{equation*}
\mu(C)=\int\limits_{SO(3)} \widetilde{Q}\,C\,\widetilde{Q}^{\,T}\,{\rm d}\sigma(Q)
\,,
\end{equation*}
where~${\rm d}\sigma(Q)$ is Haar measure on~$SO(3)$\,.
Note that~$\mu(C)$ is just the Gazis et al. average of~$C$\,.

Similarly, for any expression~$X(C)$ of submatrices of~$C$\,, which appear in the Backus-average formulas,
\begin{equation*}
\overline{X}:=\lim_{n\to\infty}\frac{1}{n}\sum_{j=1}^{n} X_{j} =\mu(X)\,.
\end{equation*}
Hence, almost surely~$\lim_{n\to\infty}\overline{\overline{\left(\,C_{1}\,,\,\ldots\,,\,C_{n}\right)}}$ equals the Backus average formula with each expression~$\overline{X}$ replaced by
\begin{equation*}
\mu(X)=\int\limits_{SO(3)} X\left(\widetilde{Q_j}\,C\,\widetilde{Q_j}{}^{\,T}\right){\rm d}\sigma(Q)\,.
\end{equation*}

\begin{theorem}\label{thm:BTI}
The~$\lim_{n\to\infty}\overline{\overline{\left(\,C_{1}\,,\,\ldots\,,\,C_{n}\right)}}$ exists almost surely, in which case it is transversely isotropic.
It is not, in general, isotropic.
\end{theorem}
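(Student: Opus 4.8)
The plan is to treat the three assertions in turn: almost-sure existence of the limit, transverse isotropy of that limit, and the failure of isotropy for at least one tensor. The first two follow cleanly from the strong law of large numbers together with the fundamental symmetry of Theorem~\ref{thm:SymProp}; only the last requires an explicit computation.

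For existence, I would argue exactly as in expression~\eqref{eq:SampleMeans}. Each arithmetic average entering the Backus formulas~\eqref{eq:BAforms}---namely $\overline{M^{-1}}$, $\overline{M^{-1}B}$, $\overline{KM^{-1}}$, $\overline{J}$, and $\overline{KM^{-1}B}$---converges almost surely to its Haar expectation $\mu(\cdot)$. Since each $M_j$ is a principal submatrix of the positive-definite $C_j$, each $M_j^{-1}$ is positive-definite, whence $\mu(M^{-1})$ is positive-definite and in particular invertible. Matrix inversion is continuous on the open set of invertible matrices, so the Backus formulas are continuous functions of these averages at the limiting values; by the continuous-mapping principle, $\overline{\overline{(C_1,\ldots,C_n)}}$ converges almost surely to the Backus formula evaluated at the $\mu(\cdot)$'s. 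Denote this deterministic limit by $C_\infty$.

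For transverse isotropy, fix any $A$ of the form~\eqref{eq:A1}, that is, a rotation through angle $\theta$ about the $x_3$-axis. Applying Theorem~\ref{thm:SymProp} at each finite $n$ gives
\[
\widetilde{A}\,\overline{\overline{(C_1,\ldots,C_n)}}\,\widetilde{A}^{\,T}
=
\overline{\overline{(\widetilde{A}C_1\widetilde{A}^{\,T},\ldots,\widetilde{A}C_n\widetilde{A}^{\,T})}}.
\]
By the homomorphism property of the tilde operation, $\widetilde{A}C_j\widetilde{A}^{\,T}=\widetilde{AQ_j}\,C\,\widetilde{AQ_j}^{\,T}$, so the rotated samples are themselves random rotations of $C$ driven by $AQ_j$. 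As $Q_j$ is Haar-distributed and Haar measure is left-invariant, $AQ_j$ is again Haar-distributed; hence the right-hand side has the same deterministic almost-sure limit $C_\infty$ as the left-hand side. Letting $n\to\infty$ therefore yields $\widetilde{A}\,C_\infty\,\widetilde{A}^{\,T}=C_\infty$ for every rotation $A$ about the $x_3$-axis. Since these rotations constitute precisely the symmetry group defining transverse isotropy about $x_3$, the tensor $C_\infty$ is transversely isotropic. (Note that only the family~\eqref{eq:A1} is needed here; the reflections~\eqref{eq:A2} are not required.)

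Finally, to show that $C_\infty$ is not isotropic in general, I would exhibit a single anisotropic $C$ for which an isotropy relation fails. An isotropic tensor in the standard orientation must satisfy $(C_\infty)_{3333}=(C_\infty)_{1111}$, among further constraints collapsing the five transversely-isotropic moduli to the two-parameter isotropic family; the Backus formulas, however, treat the layering direction $x_3$ asymmetrically, building the $33$-block from the harmonic-type average $\bigl(\mu(M^{-1})\bigr)^{-1}$ while the in-plane entries descend from the arithmetic average $\mu(J)$. Diagonal matrices, which by Remark~\ref{rem:DiagMatrices} are limits of elasticity tensors and keep the $SO(3)$ integrals manageable, furnish a convenient test class: I would evaluate $\mu(M^{-1})$, $\mu(J)$ and the remaining expectations for such a $C$ and verify that the resulting moduli do not satisfy the isotropy constraints. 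The main obstacle is exactly this step. Transverse isotropy in the first two parts is forced by symmetry and needs no computation, but establishing the \emph{failure} of isotropy requires carrying out the Haar integrals for an explicit counterexample and confirming that the isotropy relations are genuinely violated rather than accidentally satisfied; controlling that nondegeneracy is where the real work lies.
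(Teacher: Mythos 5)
Your proposal follows essentially the same route as the paper: almost-sure convergence of the constituent arithmetic averages to their Haar expectations, invariance of the limit under $\widetilde{A}\,(\cdot)\,\widetilde{A}^{\,T}$ via Theorem~\ref{thm:SymProp} combined with the left-invariance of Haar measure, and a diagonal counterexample for the failure of isotropy. The paper likewise stops short of a fully analytic counterexample---it exhibits the conjectured limit for $C=\mathrm{diag}[\,1,1,1,1,0,0\,]$ supported by numerical evidence---so your candid acknowledgement that the Haar-integral computation is the remaining work matches the paper's own status for that step.
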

\begin{proof}
Let~$A\in\R^{3\times3}$ be an orthogonal matrix of type \eqref{eq:A1} or \eqref{eq:A2}.
Then
\begin{align*}
\widetilde{C}_j
&:=\widetilde{A}\,C_j\widetilde{A}^{\,T},\quad j=1\,,\,\ldots\,,\,n\,,\\
&=\widetilde{A}\left(\widetilde{Q_j}\,C_{j}\,\widetilde{Q_j}{}^{\,T}\right)\widetilde{A}^{\,T}\\
&=\left(\widetilde{A}\,\widetilde{Q_j}\right)C_{j}\left(\widetilde{A}\,\widetilde{Q_j}\right)^{\,T}\\
&=\widetilde{\left(AQ_j\right)}\,C_{j}\,\widetilde{\left(AQ_j\right)}^{\,T}
\end{align*}
by the properties of the tilde operation, are also random samples from the same distribution.
Hence, almost surely,
\begin{equation*}
\lim_{n\to\infty}\overline{\overline{
\left(\,\widetilde{C}_1\,,\,\ldots\,,\,\widetilde{C}_n\,\right)
}}
=
\lim_{n\to\infty}\overline{\overline{
\left(\,C_1\,,\,\ldots\,,\,C_n\,\right)
}}
=
B\,,
\quad\hbox{say}\,.
\end{equation*}
But by the symmetry property of the Backus average, Theorem~\ref{thm:SymProp},
\begin{equation*}
\overline{\overline{
\left(\,\widetilde{C}_1\,,\,\ldots\,,\,\widetilde{C}_n\,\right)
}}
=
\widetilde{A}\,\,\overline{\overline{
\left(\,C_1\,,\,\ldots\,,\,C_n\,\right)
}}\,\widetilde{A}^{\,T}.
\end{equation*}
Thus
\begin{equation*} 
B=
\widetilde{A}\,B\,\widetilde{A}^{\,T}\,,
\end{equation*}
which means that $B$ is invariant under a rotation of space by~$A$\,.
Consequently, $B$ is a transversely isotropic tensor.

In general, the limit tensor is not isotropic, as illustrated by the following example.
Let
\begin{equation*}
C={\rm diag}\left[\,1\,,\,1\,,\,1\,,\,1\,,\,0\,,\,0\,\right]\,,
\end{equation*}
which, as stated in Remark~\ref{rem:DiagMatrices}, represents a limiting case of an elasticity tensor.
Numerical evidence strongly suggests  that 
\begin{equation*}
B
=
\left[\def\arraystretch{1.25}\begin{array}{cccccc}
\tfrac{1}{2} & \tfrac{1}{4} & \tfrac{1}{4} & 0 & 0 & 0 \\
\tfrac{1}{4} & \tfrac{1}{2} & \tfrac{1}{4} & 0 & 0 & 0 \\
\tfrac{1}{4} & \tfrac{1}{4} & \tfrac{1}{2} & 0 & 0 & 0 \\
0 & 0 & 0 & 0 & 0 & 0 \\
0 & 0 & 0 & 0 & 0 & 0 \\
0 & 0 & 0 & 0 & 0 & \tfrac{1}{4}
\end{array}\right]\,,
\end{equation*}
which is {\it not} isotropic.

Although this is rather an artificial example, it could---with some computational effort---be ``promoted'' to a legal proof.
The conclusion is readily confirmed by the numerical examples presented in Section~\ref{sec:NumericalExamples}.
\end{proof}
In fact, it is easy to identify the limiting matrix~$B$\,; it is just the Backus average expression~\eqref{eq:BAforms}, with an expression~$\overline{X(C)}$ replaced by the true mean
\begin{equation}\label{eq:SectionB}
\mu(X(C))=\int\limits_{SO(3)} X\left(\widetilde{Q}\,C\,\widetilde{Q}^{\,T}\right){\rm d}\sigma(Q)\,.
\end{equation}
This limiting transversely isotropic tensor is of natural interest in its own right.
It plays the role of the Gazis et al. average in the context of the Backus average, and is the subject of a forthcoming work.
\section{Numerical example}
\label{sec:NumericalExamples}
Let us consider the elasticity tensor obtained by~\citet{DewanganGrechka2003}; its components are estimated from seismic measurements in New Mexico,
\begin{equation}\label{eq:GrechkaTensor}
C=
\left[\begin{array}{c*{5}{c}}
7.8195 & 3.4495 & 2.5667 & \sqrt{2}\,(0.1374) & \sqrt{2}\,(0.0558) & \sqrt{2}\,(0.1239)\\
3.4495 & 8.1284 & 2.3589 & \sqrt{2}\,(0.0812) & \sqrt{2}\,(0.0735) & \sqrt{2}\,(0.1692)\\
2.5667 & 2.3589 & 7.0908 & \sqrt{2}\,(-0.0092) & \sqrt{2}\,(0.0286) & \sqrt{2}\,(0.1655)\\
\sqrt{2}\,(0.1374) & \sqrt{2}\,(0.0812) & \sqrt{2}\,(-0.0092) & 2\,(1.6636) & 2\,(-0.0787) & 2\,(0.1053)\\
\sqrt{2}\,(0.0558) & \sqrt{2}\,(0.0735) & \sqrt{2}\,(0.0286) & 2\,(-0.0787) & 2\,(2.0660) & 2\,(-0.1517)\\
\sqrt{2}\,(0.1239) & \sqrt{2}\,(0.1692) & \sqrt{2}\,(0.1655) & 2\,(0.1053) & 2\,(-0.1517) & 2\,(2.4270)
\end{array}\right]
\,.
\end{equation}
Using tensor~\eqref{eq:GrechkaTensor}, let us demonstrate two methods to obtain~$B$ and their mutual convergence in the limit.

The first method to obtain~$B$ requires a stack of layers, whose elasticity tensors are $C$\,.
We rotate each~$C$\,, using a random unit quaternion, and perform the Backus average of the resulting stack of layers.
Using~$10^{7}$ layers, the Backus average is
\begin{equation}\label{eq:Brot}
B_{\equiv}
=
\left[\begin{array}{c*{5}{c}}
7.3008 & 
2.9373 & 
2.9379 & 
\sqrt{2}\left(0.0000\right) & 
\sqrt{2}\left(0.0000\right) & 
\sqrt{2}\left(0.0000\right) 
\\
2.9373 & 
7.3010 & 
2.9381 & 
\sqrt{2}\left(0.0000\right) & 
\sqrt{2}\left(0.0000\right) & 
\sqrt{2}\left(0.0000\right) 
\\
2.9379 & 
2.9381 & 
7.2689 & 
\sqrt{2}\left(0.0000\right) & 
\sqrt{2}\left(-0.0001\right) & 
\sqrt{2}\left(0.0000\right) 
\\
\sqrt{2}\left(0.0000\right) & 
\sqrt{2}\left(0.0000\right) & 
\sqrt{2}\left(0.0000\right) & 
2\left(2.1710\right) & 
2\left(0.0000\right) & 
2\left(0.0000\right) 
\\
\sqrt{2}\left(0.0000\right) & 
\sqrt{2}\left(0.0000\right) & 
\sqrt{2}\left(-0.0001\right) & 
2\left(0.0000\right) & 
2\left(2.1710\right) & 
2\left(0.0000\right) 
\\
\sqrt{2}\left(0.0000\right) & 
\sqrt{2}\left(0.0000\right) & 
\sqrt{2}\left(0.0000\right) & 
2\left(0.0000\right) & 
2\left(0.0000\right) & 
2\left(2.1819\right)
\end{array}\right]
\,.
\end{equation}
For an explicit formulation of the Backus average of generally anisotropic media, see~\citet[expressions~(5)--(9)]{BDaltonSS2017}.

The second method requires integrals in place of arithmetic averages.
Similarly to the first method, we use a random unit quaternion, which is tantamount to a point on a 3-sphere.
We approximate the triple integral using Simpson's and trapezoidal rules.
Effectively, the triple integral is replaced by a weighted sum of the integrand evaluated at discrete points. 
The sums that approximate the integrals are accumulated and are used in expressions~\eqref{eq:BAforms}.

Using the Simpson's and trapezoidal rules, with a sufficient number of subintervals, the Backus average is
\begin{equation}\label{eq:Bint}
B_{\!\int\!\!\!\int\!\!\!\int}
=
\left[\begin{array}{c*{5}{c}}
7.3010 & 
2.9373 & 
2.9380 & 
\sqrt{2}\left(0.0000\right) & 
\sqrt{2}\left(0.0000\right) & 
\sqrt{2}\left(0.0000\right) 
\\
2.9373 & 
7.3010 & 
2.9380 & 
\sqrt{2}\left(0.0000\right) & 
\sqrt{2}\left(0.0000\right) & 
\sqrt{2}\left(0.0000\right) 
\\
2.9380 & 
2.9380 & 
7.2687 & 
\sqrt{2}\left(0.0000\right) & 
\sqrt{2}\left(0.0000\right) & 
\sqrt{2}\left(0.0000\right) 
\\
\sqrt{2}\left(0.0000\right) & 
\sqrt{2}\left(0.0000\right) & 
\sqrt{2}\left(0.0000\right) & 
2\left(2.1711\right) & 
2\left(0.0000\right) & 
2\left(0.0000\right) 
\\
\sqrt{2}\left(0.0000\right) & 
\sqrt{2}\left(0.0000\right) & 
\sqrt{2}\left(-0.0001\right) & 
2\left(0.0000\right) & 
2\left(2.1711\right) & 
2\left(0.0000\right) 
\\
\sqrt{2}\left(0.0000\right) & 
\sqrt{2}\left(0.0000\right) & 
\sqrt{2}\left(0.0000\right) & 
2\left(0.0000\right) & 
2\left(0.0000\right) & 
2\left(2.1818\right)
\end{array}\right]
\,.
\end{equation}

In the limit, the components of expressions~\eqref{eq:Brot} and~\eqref{eq:Bint} are the same; their similarity is illustrated in Figure~\ref{fig:Dist}, where the horizontal axis is the number of layers and the vertical axis is the maximum componentwise difference between the two tensors.

\begin{figure}
\centering
\includegraphics[width=0.75\textwidth]{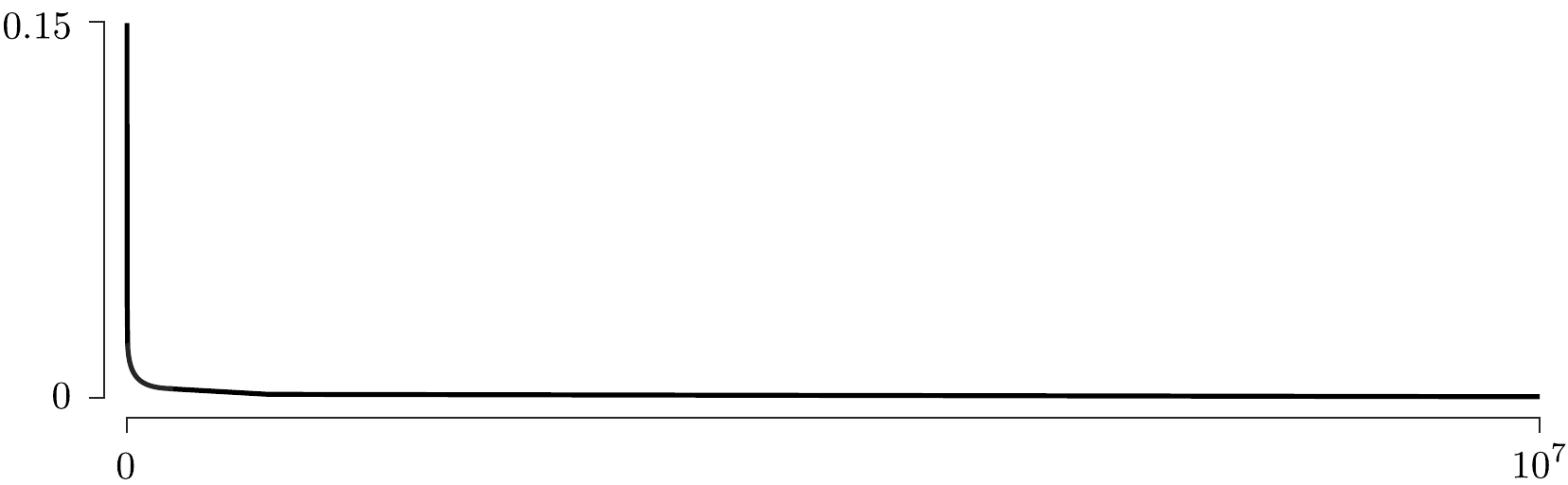}
\caption{
Difference between tensors~\eqref{eq:Brot} and~\eqref{eq:Bint}
}
\label{fig:Dist}
\end{figure}

Expression~\eqref{eq:Bint} is transversely isotropic, as expected from Theorem~\ref{thm:BTI}, and in accordance with \citet[Section~4.3]{BonaEtAl2007}, since its four distinct eigenvalues are
\begin{equation}\label{eq:EigValsBint}
\lambda_{1}=13.1658\,,\,
\lambda_{2}=4.3412\,,\,
\lambda_{3}=4.3636\,,\,
\lambda_{4}=4.3421\,,\,
\end{equation}
with multiplicities of $m_{1}=m_{2}=1$ and $m_{3}=m_{4}=2$\,.
The eigenvalues of expression~\eqref{eq:Brot} are in agreement---up to $10^{-3}$---with eigenvalues~\eqref{eq:EigValsBint} and their multiplicities.
Furthermore, in accordance with Theorem~\ref{thm:BTI}, in the limit, the distance to the closest isotropic tensor for expression~\eqref{eq:Bint} is $0.0326\neq 0$\,; thus the distance does not reduce to zero.

Expressions~\eqref{eq:Brot} and~\eqref{eq:Bint} are transversely isotropic, which is the main conclusion of this work, even though, for numerical modelling, one might view them as isotropic.
This is indicated by~\citet{Thomsen1986} parameters, which for tensor~\eqref{eq:Bint} are
\begin{equation*}
\gamma=2.4768\times10^{-3}\,,\,
\delta=1.5816\times10^{-3}\,,\,
\epsilon=2.2219\times10^{-3}\,; 
\end{equation*}
values much less than unity indicate very weak anisotropy.
\section{Conclusions and future work}
\label{sec:FutureWork}
Examining the Backus average of a stack of layers consisting of randomly oriented anisotropic elasticity tensors, we show that---in the limit---this average results in a homogeneous transversely isotropic medium, as stated by Theorems~\ref{thm:SymProp} and~\ref{thm:BTI}.
In other words, the randomness within layers does not result in a medium lacking a directional pattern.
Both the isotropic layers, as shown by~\citet{Backus1962}, and randomly oriented anisotropic layers, as shown herein, result in the average that is transversely isotropic, as a consequence of inhomogeneity among parallel layers.
This property is discussed by~\citet{AdamusEtAl2018}, and herein it is illustrated in Appendix~\ref{app:Alternating}.

In the limit, the transversely isotropic tensor is the Backus counterpart of the Gazis et al. average.
Indeed, the arithmetic average of randomized layers of an elasticity tensor produces the Gazis et al. average and is its closest isotropic tensor, according to the Frobenius norm.
On the other hand, the Backus average of the layers resulting from a randomization of the same tensor produces the transversely isotropic tensor given in expression~\eqref{eq:SectionB}.
This tensor and its properties are the subject of a forthcoming paper.
\section*{Acknowledgments}
We wish to acknowledge discussions with Michael G. Rochester, proofreading of David R. Dalton, as well as the graphic support of Elena Patarini.
This research was performed in the context of The Geomechanics Project supported by Husky Energy. 
Also, this research was partially supported by the Natural Sciences and Engineering Research Council of Canada, grant 238416-2013.
\bibliographystyle{apa}
\bibliography{BSS_arXiv.bib}
\begin{appendix}
\setcounter{equation}{0}
\renewcommand{\theequation}{\thesection.\arabic{equation}}
\section{Rotations by unit quaternions}
\label{app:Atilde}
The $\mathbb{R}^{6}$ equivalent for $A\in SO(3)$ of $c_{ijk\ell}$\,, which is the rotation of tensor~\eqref{eq:Chapman}, is~\citep[e.g.,][equation~(3.42)]{Slawinski2018}
\begin{align*}
&\nonumber \tilde{A}=\\
&{{\left[\!\begin{array}{c*{5}{c}}
A_{11}^{2} & A_{12}^{2} & A_{13}^{2} & \sqrt{2}\,A_{12}A_{13} & \sqrt{2}\,A_{11}A_{13} & \sqrt{2}\,A_{11}A_{12}\\
A_{21}^{2} & A_{22}^{2} & A_{23}^{2} & \sqrt{2}\,A_{22}A_{23} & \sqrt{2}\,A_{21}A_{23} & \sqrt{2}\,A_{21}A_{22}\\
A_{31}^{2} & A_{32}^{2} & A_{33}^{2} & \sqrt{2}\,A_{32}A_{33} & \sqrt{2}\,A_{31}A_{33} & \sqrt{2}\,A_{31}A_{32}\\
\sqrt{2}\,A_{21}A_{31} & \sqrt{2}\,A_{22}A_{32} & \sqrt{2}\,A_{23}A_{33} & A_{23}A_{32}+A_{22}A_{33} & A_{23}A_{31}+A_{21}A_{33} & A_{22}A_{31}+A_{21}A_{32}\\
\sqrt{2}\,A_{11}A_{31} & \sqrt{2}\,A_{12}A_{32} & \sqrt{2}\,A_{13}A_{33} & A_{13}A_{32}+A_{12}A_{33} & A_{13}A_{31}+A_{11}A_{33} & A_{12}A_{31}+A_{11}A_{32}\\
\sqrt{2}\,A_{11}A_{21} & \sqrt{2}\,A_{12}A_{22} & \sqrt{2}\,A_{13}A_{23} & A_{13}A_{22}+A_{12}A_{23} & A_{13}A_{21}+A_{11}A_{23} & A_{12}A_{21}+A_{11}A_{22}
\end{array}\right]}}
\,.
\end{align*}
In quaternions, this expression is
\begin{align}\label{eq:At}
\nonumber\widetilde{A} &= \\
\nonumber&\left[\small\begin{array}{cc}
\left(a^{2}+b^{2}-c^{2}-d^{2}\right)^{2} & \left(2\,b\,c - 2\,a\,d\right)^{2} \\
\left(2\,b\,c + 2\,a\,d\right)^{2} & \left(a^{2}-b^{2}+c^{2}-d^{2}\right)^{2} \\
\left(2\,b\,d - 2\,a\,c\right)^{2} & \left(2\,a\,b + 2\,c\,d\right)^{2} \\
\sqrt{2}\left(2\,b\,c + 2\,a\,d\right)\left(2\,b\,d - 2\,a\,c\right) & \sqrt{2}\left(2\,a\,b + 2\,c\,d\right)\left(a^{2}-b^{2}+c^{2}-d^{2}\right) \\
\sqrt{2}\left(2\,b\,d - 2\,a\,c\right)\left(a^{2}+b^{2}-c^{2}-d^{2}\right) & \sqrt{2}\left(2\,b\,c - 2\,a\,d\right)\left(2\,a\,b + 2\,c\,d\right) \\
\sqrt{2}\left(2\,b\,c + 2\,a\,d\right)\left(a^{2}+b^{2}-c^{2}-d^{2}\right) & \sqrt{2}\left(2\,b\,c - 2\,a\,d\right)\left(a^{2}-b^{2}+c^{2}-d^{2}\right)
\end{array}\right.\\
\nonumber&\left.\small\hspace*{-0.25cm}\begin{array}{cc}
\left(2\,a\,c + 2\,b\,d\right)^{2} & \sqrt{2}\left(2\,b\,c - 2\,a\,d\right)\left(2\,a\,c + 2\,b\,d\right) \\
\left(2\,c\,d - 2\,a\,b\right)^{2} & \sqrt{2}\left(2\,c\,d - 2\,a\,b\right)\left(a^{2}-b^{2}+c^{2}-d^{2}\right) \\
\left(a^{2}-b^{2}-c^{2}+d^{2}\right)^{2} & \sqrt{2}\left(2\,a\,b + 2\,c\,d\right)\left(a^{2}-b^{2}-c^{2}+d^{2}\right) \\
\sqrt{2}\left(2\,c\,d - 2\,a\,b\right)\left(a^{2}-b^{2}-c^{2}+d^{2}\right) & \left(2\,c\,d - 2\,a\,b\right)\left(2\,a\,b + 2\,c\,d\right) + \left(a^{2}-b^{2}+c^{2}-d^{2}\right)\left(a^{2}-b^{2}-c^{2}+d^{2}\right) \\
\sqrt{2}\left(2\,a\,c + 2\,b\,d\right)\left(a^{2}-b^{2}-c^{2}+d^{2}\right) & \left(2\,a\,c + 2\,b\,d\right)\left(2\,a\,b + 2\,c\,d\right) + \left(2\,b\,c - 2\,a\,d\right)\left(a^{2}-b^{2}-c^{2}+d^{2}\right) \\
\sqrt{2}\left(2\,a\,c + 2\,b\,d\right)\left(2\,c\,d - 2\,a\,b\right) & \left(2\,c\,d - 2\,a\,b\right)\left(2\,b\,c - 2\,a\,d\right) + \left(2\,a\,c + 2\,b\,d\right)\left(a^{2}-b^{2}+c^{2}-d^{2}\right) \\
\end{array}\right.\\
\nonumber&\left.\small\hspace{2.75cm}\begin{array}{c}
\sqrt{2}\left(2\,a\,c + 2\,b\,d\right)\left(a^{2}+b^{2}-c^{2}-d^{2}\right) \\
\sqrt{2}\left(2\,b\,c + 2\,a\,d\right)\left(2\,c\,d - 2\,a\,b\right) \\
\sqrt{2}\left(2\,b\,d - 2\,a\,c\right)\left(a^{2}-b^{2}-c^{2}+d^{2}\right) \\
\left(2\,b\,d - 2\,a\,c\right)\left(2\,c\,d - 2\,a\,b\right) + \left(2\,b\,c + 2\,a\,d\right)\left(a^{2}-b^{2}-c^{2}+d^{2}\right) \\
\left(2\,b\,d - 2\,a\,c\right)\left(2\,a\,c + 2\,b\,d\right) + \left(a^{2}+b^{2}-c^{2}-d^{2}\right)\left(a^{2}-b^{2}-c^{2}+d^{2}\right) \\
\left(2\,b\,c + 2\,a\,d\right)\left(2\,a\,c + 2\,b\,d\right) + \left(2\,c\,d-2\,a\,b\right)\left(a^{2}+b^{2}-c^{2}-d^{2}\right)
\end{array}\right.\\
&\left.\small\hspace{4.75cm}\begin{array}{c}
\sqrt{2}\left(2\,b\,c - 2\,a\,d\right)\left(a^{2}+b^{2}-c^{2}-d^{2}\right) \\
\sqrt{2}\left(2\,b\,c + 2\,a\,d\right)\left(a^{2}-b^{2}+c^{2}-d^{2}\right) \\
\sqrt{2}\left(2\,b\,d - 2\,a\,c\right)\left(2\,a\,b + 2\,c\,d\right) \\
\left(2\,b\,c + 2\,a\,d\right)\left(2\,a\,b + 2\,c\,d\right) + \left(2\,b\,d - 2\,a\,c\right)\left(a^{2}-b^{2}+c^{2}-d^{2}\right) \\
\left(2\,b\,d - 2\,a\,c\right)\left(2\,b\,c - 2\,a\,d\right) + \left(2\,a\,b + 2\,c\,d\right)\left(a^{2}+b^{2}-c^{2}-d^{2}\right) \\
\left(2\,b\,c + 2\,a\,d\right)\left(2\,b\,c - 2\,a\,d\right) + \left(a^{2}+b^{2}-c^{2}-d^{2}\right)\left(a^{2}-b^{2}+c^{2}-d^{2}\right)
\end{array}\right]\,.
\end{align}
\section{Alternating layers}
\label{app:Alternating}
Consider a randomly-generated elasticity tensor,
\begin{equation}\label{eq:Alternating}
C=
\left[\begin{array}{c*{5}{c}}
14.5739 & 6.3696 & 2.9020 & \sqrt{2}\,(9.4209) & \sqrt{2}\,(3.8313) & \sqrt{2}\,(3.5851)\\
6.3696 & 10.7276 & 6.2052 & \sqrt{2}\,(4.0375) & \sqrt{2}\,(5.1333) & \sqrt{2}\,(6.0745)\\
2.9020 & 6.2052 & 11.4284 & \sqrt{2}\,(1.9261) & \sqrt{2}\,(9.8216) & \sqrt{2}\,(1.3827)\\
\sqrt{2}\,(9.4209) & \sqrt{2}\,(4.0375) & \sqrt{2}\,(1.9261) & 2\,(13.9034) & 2\,(0.2395) & 2\,(2.0118)\\
\sqrt{2}\,(3.8313) & \sqrt{2}\,(5.1333) & \sqrt{2}\,(9.8216) & 2\,(0.2395) & 2\,(10.7353) & 2\,(0.0414)\\
\sqrt{2}\,(3.5851) & \sqrt{2}\,(6.0745) & \sqrt{2}\,(1.3827) & 2\,(2.0118) & 2\,(0.0414) & 2\,(9.0713)
\end{array}\right]
\,,
\end{equation}
whose eigenvalues are
\begin{equation*}
\lambda_{1}=34.0318\,,\,
\lambda_{2}=18.1961\,,\,
\lambda_{3}=10.4521\,,\,
\lambda_{4}=4.8941\,,\,
\lambda_{5}=2.2737\,,\,
\lambda_{6}=0.5921\,.
\end{equation*}
The Backus average of $10^7$ alternating layers composed of randomly oriented tensors~\eqref{eq:GrechkaTensor} and~\eqref{eq:Alternating} is
\begin{equation*}
B_{\!\int\!\!\!\int\!\!\!\int}=
\left[\begin{array}{c*{5}{c}}
8.4711 & 1.1917 & 1.2572 & \sqrt{2}\,(0.0000) & \sqrt{2}\,(0.0000) & \sqrt{2}\,(0.0000) \\
1.1917 & 8.4710 & 1.2570 & \sqrt{2}\,(0.0000) & \sqrt{2}\,(0.0000) & \sqrt{2}\,(0.0000) \\
1.2572 & 1.2570 & 6.6648 & \sqrt{2}\,(-0.0001) & \sqrt{2}\,(0.0000) & \sqrt{2}\,(0.0000) \\
\sqrt{2}\,(0.0000) & \sqrt{2}\,(0.0000) & \sqrt{2}\,(-0.0001) & 2\,(2.8440) & 2\,(0.0000) & 2\,(0.0000) \\
\sqrt{2}\,(0.0000) & \sqrt{2}\,(0.0000) & \sqrt{2}\,(0.0000) & 2\,(0.0000) & 2\,(2.8440) & 2\,(0.0000) \\
\sqrt{2}\,(0.0000) & \sqrt{2}\,(0.0000) & \sqrt{2}\,(0.0000) & 2\,(0.0000) & 2\,(0.0000) & 2\,(3.6340)
\end{array}\right]
\,.
\end{equation*}
Its eigenvalues show that this is a transversely isotropic tensor,
\begin{equation*}
\lambda_{1}=10.4892\,,\,
\lambda_{2}=5.8384\,,\,
\lambda_{3}=7.2794\,,\,
\lambda_{4}=7.2793\,,\,
\lambda_{5}=5.6880\,,\,
\lambda_{6}=5.6878\,.
\end{equation*}
Its Thomsen parameters,
\begin{equation*}
\gamma=0.1400\,,\,
\delta=0.0433\,,\,
\epsilon=0.1353\,,
\end{equation*}
indicate greater anisotropy than for tensor~\eqref{eq:Bint}, as expected.
In other words, an emphasis of a pattern of inhomogeneity results in an increase of anisotropy.
\end{appendix}
\end{document}